\newcommand{\ifndef}[2]{\ifthenelse{\isundefined{#1}}{#2}{}}
\newcommand{\mydef}[2]{\def#1{#2}}
\newcommand{\nospell}[1]{#1}   %
\ifndef{\theorem}{\newtheorem{theorem}{Theorem}[section]}
\ifndef{\lemma}{}
\ifndef{\corollary}{\newtheorem{corollary}[theorem]{Corollary}}
\ifndef{\conjecture}{}
\ifndef{\remark}{\theoremstyle{remark} }
\ifndef{\proposition}{}
\ifndef{\claim}{}
\ifndef{\result}{}
\ifndef{\problem}{}
\newtheoremstyle{mydefinition}   %
{\topsep}{\topsep}   %
{\slshape}   %
{}   %
{\bfseries}   %
{.}   %
{ }   %
{}   %
\newtheoremstyle{myremark}   %
{\topsep}{\topsep}   %
{\slshape}   %
{}   %
{\bfseries\slshape}   %
{:}   %
{ }   %
{}   %
\newtheoremstyle{myexample}   %
{\topsep}{\topsep}   %
{\itshape}   %
{}   %
{\slshape}   %
{:}   %
{ }   %
{\ul{\thmname{#1}}}   %
\newtheoremstyle{myclaims}   %
{\topsep}{\topsep}   %
{\slshape}   %
{}   %
{\bfseries\itshape}   %
{}   %
{ }   %
{\thmname{#1}\thmnumber{ \!#2}.}   %
\theoremstyle{mydefinition}}
\theoremstyle{myremark}}
\ifndef{\example}{\theoremstyle{myexample} }
{\theoremstyle{myclaims}

\ifndef{\fact}{}
}
{}  %
{}  %
\newtheoremstyle{anystatement}{\topsep}{\topsep}{\itshape}{}{\bfseries}{.}{ }{\anystatementname}
{\theoremstyle{anystatement}}
\newcommand{\anystatementname}{}
\newcommand{\AuxNew}[4][]{#2{#3}[1][*]%
{\ifthenelse{\equal{*}{##1}} %
{\Ensuremath{#1{#4}}}%
{\ifthenelse{\equal{b}{##1}} %
{\Ensuremath{\mathbf{#4}}}%
{\ifthenelse{\equal{}{##1}} %
{\IfMathMode{#1{#4}}{#4}}{}}}}}
\newcommand{\newident}[3][*]{\ifthenelse{\equal{*}{#1}}%
{\AuxNew[\mathit]{\newcommand}{#2}{#3}} %
{\mydef{#2}{\Ensuremath{\mathit{#3}}}}} %
\newcommand{\newmat}[3][*]{\ifthenelse{\equal{*}{#1}}%
{\AuxNew{\newcommand}{#2}{#3}} %
{\mydef{#2}{\Ensuremath{#3}}}} %
\newcommand{\providemat}[3][*]{\ifthenelse{\equal{*}{#1}} %
{\AuxNew{\providecommand}{#2}{#3}} %
{\mydef{#2}{\Ensuremath{#3}}}} %
\newcommand{\providematarg}[2]{ %
\providecommand{#1}[1][]{\Ensuremath{#2}}}
\newcommand{\newfunction}[2]{ %
\newcommand{#1}[2][*]{\ifthenelse{\equal{*}{##1}}%
{\Ensuremath{#2{\left(##2\right)}}}%
{#2(##2)}}%
}
\newcommand{\MyMakeTheoMacros}[3]{
\newcommand{#2}[2][]{\ifthenelse{\equal{}{##1}}
{\begin{#1} ##2 \end{#1}}
{\begin{#1}\label{##1} ##2\end{#1}}}
\newcommand{#3}[3][]{\ifthenelse{\equal{}{##1}}
{\begin{#1}[##2] ##3 \end{#1}}
{\begin{#1}[##2]\label{##1} ##3\end{#1}}}
}
\newtheorem*{rep@theorem}{\rep@title}
\newcommand{\newreptheorem}[2]{%
\newenvironment{rep#1}[1]{%
\def\rep@title{#2 \ref{##1}}%
\begin{rep@theorem}}%
{\end{rep@theorem}}}
\newcommand{\MyMakeDupTheoMacros}[7]{
\MyMakeTheoMacros{#1}{#2}{#3}
\newreptheorem{#1}{#6}
\newcommand{#4}[3]{
\newcommand{##2}{##3}
\begin{#1}\label{##1} ##2\end{#1}}
\newcommand{#5}[4]{
\newcommand{##2}{##4}
\begin{#1}{\e{##3}}\label{##1} ##2\end{#1}}
\newcommand{#7}[2]{\begin{rep#1}{##1} ##2 \end{rep#1}}
}
\newcommand{\MyMakeRefMacros}[3]{\newcommand{#1}[2][]
{\ifthenelse{\equal{}{##1}}{#2~\ref{##2}}{#3~\ref{##1} and~\ref{##2}}}}
\newcommand{\MyMakeEqRefMacros}[3]{\newcommand{#1}[2][]
{\ifthenelse{\equal{}{##1}}{#2~\eqref{##2}}{#3~\eqref{##1} and~\eqref{##2}}}}
\newcommand{\bibentry}[8]{
{}\bibitem[\nospell{#8}]{#1} {\textup #3}.{}
\ifthenelse{\equal{}{#6}}
{\newblock \textrm{#4.} \newblock {\em #5}, #7.}
{\newblock \textrm{#4.} \newblock {\em #5, #6}, #7.}
}
\MyMakeRefMacros{\lemref}{Lemma}{Lemmas}
\MyMakeRefMacros{\crlref}{Corollary}{Corollaries}
\newtheorem*{prp*}{\e{Proposition}}
\MyMakeRefMacros{\prpref}{Proposition}{Propositions}
\MyMakeRefMacros{\clmref}{Claim}{Claims}
\MyMakeRefMacros{\theoref}{Theorem}{Theorems}
\MyMakeRefMacros{\defiref}{Definition}{Definitions}
\MyMakeRefMacros{\probref}{Problem}{Problems}
\MyMakeRefMacros{\conjref}{Conjecture}{Conjectures}
\newcommand{\fakeconjref}[1]{Conjecture~{#1}}
\providecommand{\qedsymbol}{\square}
\newcommand{\prf}[2][]{\ifthenelse{\equal{}{#1}}%
{\begin{proof}\renewcommand{\qedsymbol}{$\blacksquare$}%
#2 \end{proof}}%
{\begin{proof}[Proof of #1]%
\renewcommand{\qedsymbol}{$\blacksquare_{\mbox{\it{\scriptsize{#1}}}}$}%
#2 \end{proof}}
}
\newcommand{\abstr}[1]{\begin{abstract} #1 \end{abstract}}
\newcommand{\sect}[2][]{\ifthenelse{\equal{}{#1}}
{\section{#2}}
{\section{#2}\label{#1}}}
\newcommand{\ssect}[2][]{\ifthenelse{\equal{}{#1}}
{\subsection{#2}}
{\subsection{#2}\label{#1}}}
\MyMakeRefMacros{\chref}{Chapter}{Chapters}
\MyMakeRefMacros{\sref}{Section}{Sections}
\MyMakeRefMacros{\ssref}{Subsection}{Subsections}
\MyMakeRefMacros{\sssref}{Subsection}{Subsections}
\MyMakeRefMacros{\figref}{Figure}{Figures}
\newcommand{\IfMathMode}[2]{\ifmmode{#1}\else{#2}\fi}
\newcommand{\Ensuremath}{\ensuremath}
\newcommand{\fbr}[1]{\IfMathMode %
{#1}{$#1$}}                      %
\newcommand{\fnbr}[1]{\mbox{\fbr{#1}}}   %
\newcommand{\fla}[2][*]{\ifthenelse{\equal{}{#1}}{\fbr{#2}}{\fnbr{#2}}}
\newcommand{\mat}[2][]{\ifthenelse{\equal{}{#1}} %
{ \begin{displaymath} #2 \end{displaymath} } %
{ \begin{equation} \label{#1} #2 \end{equation} }%
}
\newcommand{\matal}[2][]{\mat[#1]{\begin{aligned} #2 \end{aligned}}}
\newcommand{\f}{\fla}
\newcommand{\m}{\mat}
\newcommand{\mal}{\matal}
\MyMakeEqRefMacros{\equref}{Equation}{Equations}
\MyMakeEqRefMacros{\expref}{Expression}{Expressions}
\MyMakeEqRefMacros{\inequref}{Inequality}{Inequalities}
\newcommand{\bracref}[1]{(\ref{#1})}
\newcommand{\bref}{\bracref}
\newcommand{\twocase}[4]%
{\begin{cases}#1 &\txt{#2}\\#3 &\txt{#4}\end{cases}}
\providecommand{\middle}{\big}
\newcommand{\chs}{\genfrac(){0cm}{}}   %
\newcommand{\Hh}[2][]{\ifthenelse{\equal{}{#2}}%
{H_{#1}}%
{H_{#1}{\left[{#2}\right]}}}
\providecommand{\E}[2][]{\ifthenelse{\equal{}{#1}}%
{\mathop{\mathbf{E}}{\left[{#2}\right]}}%
{\mathop{\mathbf{E}}_{#1}{\left[{#2}\right]}}}
\newcommand{\Var}[1]{\mathop{\mathbf{Var}}{\left[{#1}\right]}}
\newcommand{\PR}[2][]{\mathop{\mathbf{Pr}}_{#1}{\left[{#2}\right]}}
\newcommand{\GF}[2][]{{\mathcal GF_{#2}^{#1}}}
\newcommand{\pl}[1][]{\nospell{\ifthenelse{\equal{}{#1}}%
{\!\stackrel'{}\!\!\txt{s}}%
{\fla{#1\!\stackrel'{}\!\!\txt{s}}}}}
\newcommand{\ord}[1][]{\nospell{\ifthenelse{\equal{}{#1}}%
{\txt{'th}}%
{\ifthenelse{\equal{1}{#1}}{$1\txt{'st}$}{\ifthenelse{\equal{2}{#1}}{$2\txt{'nd}$}{\ifthenelse{\equal{3}{#1}}{$3\txt{'rd}$}{\fla{#1\txt{'th}}}}}}}}
\newcommand{\fr}[3][*]{%
\ifthenelse{\equal{*}{#1}}        %
{\frac{#2}{#3}}{}%
\ifthenelse{\equal{/}{#1}}        %
{\nicefrac{#2}{#3}}{}%
\ifthenelse{\equal{}{#1}}         %
{\left.#2\middle/#3\right.}{}%
\ifthenelse{\equal{p_}{#1}}       %
{\left.\left(#2\right)\middle/#3\right.}{}%
\ifthenelse{\equal{_p}{#1}}       %
{\left.#2\middle/\left(#3\right)\right.}{}%
\ifthenelse{\equal{pp}{#1}}       %
{\left.\left(#2\right)\middle/\left(#3\right)\right.}{}
}
\newcommand{\dr}{\nicefrac}
\newcommand{\sq}{\mathpalette\MySQRT}  %
\def\MySQRT#1#2{    %
\setbox0=\hbox{$#1\sqrt{#2\,}$}\dimen0=\ht0%
\advance\dimen0-0.2\ht0%
\setbox2=\hbox{\vrule height\ht0 depth -\dimen0}%
{\box0\lower0.4pt\box2}}
\newcommand{\set}[2][]{\ifthenelse{\equal{}{#1}} %
{\Ensuremath{\left\{#2\right\}}}%
{\Ensuremath{\left\{#2\,\middle\arrowvert\,#1\right\}}}}
\newcommand{\Max}[2][]{\ifthenelse{\equal{}{#1}} %
{\Ensuremath{\max{\left\{#2\right\}}}}%
{\Ensuremath{\max{\left\{#2\,\middle\arrowvert\,#1\right\}}}}}
\providecommand{\ip}[2]{\Ensuremath{\lla #1,#2\rra}}
\renewcommand{\l}{\left}
\renewcommand{\r}{\right}
\providecommand{\norm}[2][]{\ifthenelse{\equal{}{#1}}%
{\Ensuremath{\left\|#2\right\|}}%
{\Ensuremath{\left\|#2\middle\|_{#1}\right.}}}
\newcommand{\txt}[1]{\textrm{#1}}   %
\DeclareMathAlphabet{\lowcal}{OT1}{pzc}{m}{it}
\providemat{\QQ}{\mathbb{Q}}
\providematarg{\CC}{\ifthenelse{\equal{}{#1}}%
{\mathbb{C}}%
{\mathbb{C}^{#1}}}
\providematarg{\RR}{\ifthenelse{\equal{}{#1}}%
{\mathbb{R}}%
{\mathbb{R}^{#1}}}
\newcommand{\ds}[1][]
{\ifthenelse{\equal{}{#1}}{\dots}{#1\dots#1}}
\newcommand{\MyComment}[1]{\ClassWarning{My Macros}{#1}}
\newcommand{\e}{\emph}
{}   %
\providecommand{\ul}[1]{\underline{#1}}  %
\newcommand{\tb}{\quad}
\newcommand{\iu}{{\rm i}}
\newcommand{\en}{{\rm e}}
\title{On the Joint Entropy of \f d-Wise-Independent Variables}
\date{\today}
\newcommand{\instDG}{Institute of Mathematics, Academy of Sciences, \v Zitna 25, Praha 1, Czech Republic.}
\newcommand{\thanksDG}{Partially funded by the grant P202/12/G061 of GA \v CR and by RVO:\ 67985840.}
\author{Dmitry Gavinsky%
\thanks{\instDG\ \thanksDG}%
~\thanks{Part of this work was done while visiting the Centre for Quantum Technologies at the National University of Singapore, and was partially funded by the Singapore Ministry of Education and the NRF.}
\and Pavel Pudl\'ak\protect\footnotemark[1]
}
{}
\begin{document}

\maketitle

\thispagestyle{empty}

\abstr{How low can the joint entropy of $n$ \f d-wise independent (for $d\ge2$) discrete random variables be, subject to given constraints on the individual distributions (say, no value may be taken by a variable with probability greater than $p$, for $p<1$)?  
This question has been posed and partially answered in a recent work of Babai~\cite{B13_En}.

In this paper we improve some of his bounds, prove new bounds in a wider range of parameters and show matching upper bounds in some special cases.
In particular, we prove tight lower bounds for the min-entropy (as well as the entropy) of pairwise and three-wise independent balanced binary variables for infinitely many values of $n$.
}

{\bf Keywords:} \f d-wise-independent variables, entropy, lower bound

{\bf MSC2010 Classification:} 60C05

\setcounter{page}{1}

\sect[s_intro]{Introduction}

Suitable choice of a (discrete) distribution is a crucial component
that underlies many results in extremal combinatorics and theoretical
computer sciences (e.g., see~\cite{AS08_The}).  It is often the case
that the ``ideal'' distribution to use would be mutually independent
over $n$ random variables $X_1\dc X_n$ (each variable taking one of
several possible values); however, ``full'' mutual independence is
``too expensive'' and a \f d-wise-independent distribution is used
instead (e.g., see~\cite{LW06_Pair}). (A string of random variables
$X_1,\dots,X_n$ is called $d$-wise independent if any $d$-tuple of the
variables is independent.) Indeed, if all variables are independent, then the sample space has at least exponential size, while $d$-wise independent spaces can be of polynomial size if $d$ is constant.
This has many applications in computer science. The
size of the space, the number of random bits needed and the joint
entropy of $X_1,\dots,X_n$ are closely related parameters that are
crucial in these applications. 

This is a motivation of the question
studied in a recent article of Babai~\cite{B13_En}: what is the
minimum entropy for $n$ pairwise independent variables. Babai showed an
asymptotically logarithmic lower bound, by proving a very nice
theorem. He proved that for any string $X_1,\dots,X_n$ of pairwise
independent binary-valued variables, where the probabilities are bounded away
from zero and one, there exists a logarithmic size subset of these
variables that is almost independent. Such a subset must have entropy
asymptotically equal to its size, so the logarithmic lower bound
follows.

Our aim in this paper is to answer some questions and improve bounds
of Babai.  For proving tight bounds, a more traditional approach (see
for example~\cite{L65_Pair}) based on a construction of orthogonal
matrices seems more suitable. This approach enables us, first, to extend
Babai's bounds to a larger range of parameters, and second, to obtain more
precise bounds. In particular, we prove that the joint entropy of  $X_1,\dots,X_n$ is logarithmic even if the entropy of the variables is only of the order of $\log n/n$, which is the lowest possible.
Furthermore, we prove a lower bound $\log(n+1)$, conjectured by Babai, on
the min-entropy of pairwise independent balanced binary variables
(i.e., when each $X_j$ is equal to $0$, respectively $1$, with
probability $1/2$). This matches the upper bounds given by the well
known construction based on Hadamard matrices. So the bound is tight
if an Hadamard matrix of dimension $n+1$ exists.

Lower bounds on the entropy of $d$-wise independent variables can be
obtained from lower bounds on pairwise independent variables by a
well-known construction that produces a longer string of pairwise
independent variables. We slightly modify this construction for odd values
of $d$ which enables us to obtain a matching upper and lower bounds
for $d=3$ and infinitely many values of $n$ (powers of $2$).

Although we are primarily interested in binary-valued variables, we will show that some of our lower bounds can be extended to the case of general (finite-outcome) pairwise independent variables.

\sect[s_prelim]{Preliminaries}

We will write $$\Hh X=\sum_x\PR{X=x}\tm\log\fr1{\PR{X=x}}$$ to denote the Shannon entropy of the (discrete) random variable $X$, and $$\Hh[min]X=\min_x\log\fr1{\PR{X=x}}$$ for the min-entropy.
Clearly, $\Hh[min]X\le\Hh X$.
All logarithms are to the base $2$.
Random variables $X_1\dc X_n$ are said to be \f d-wise independent if for every $s\in\chs{[n]}d$, the variables $(X_i)_{i\in s}$ are mutually independent.
A random variable is called \e{binary} if it is supported on $\01$.

Recall Cantelli's inequality~\cite{C10_Intorno} -- a strengthening of Chebyshev's inequality for the case of one-sided deviations:
\nlem[l_Cant]{Cantelli's inequality}{For every random variable $X$ and real $t>0$,
\begin{equation}\label{eq-2}
\PR{X\le\E X-t},~\PR{X\ge\E X+t}
\le\fr{1}{1+\fr{t^2}{\Var X}}.
\end{equation}}

\sect[s_low]{Lower bounds}

In this section we give lower bounds on the joint entropy of $n$ \f d-wise-independent variables.

\ssect[ss_Ber]{Pairwise independent binary variables}

Here we give two incomparable entropy lower bounds for the families of pairwise independent binary variables.

\theo[t_low]{Let $X=\l(X_1\dc X_n\r)$ be $n$ pairwise independent binary variables.
Let $q_j=\PR{X_j=1}$ and suppose that $0<q_j<1$ for $j=1\dc n$.
Then
\m{\Hh X\geq\sup_{0<t\leq n}\fr{\log (n+1-t)}{1+\fr 1{t^2}\sum_{j=1}^n\fr{(1-2q_j)^2}{q_j(1-q_j)}}.}
}

\prf{Let $A=\set{a_1\dc a_m}\sbseq\01^n$ be the support of $X$ and $p_i\deq\PR{X=a_i}$.
We will denote by $a_{ij}$ the \ord[j] element of $a_i$ for $j\in[n]$.

Define an $m\times (n+1)$ matrix
$U=\{u_{ij}\}$ as follows.
For all $i\in[m]$,
\m{u_{i0}\deq \sqrt{p_i},}
and for $j\in[n]$,
\m{u_{ij}\deq\twocase
{-\sq{\fr{p_iq_j}{1-q_j}}}{if $a_{ij}=0$}
{\sq{\fr{p_i(1-q_j)}{q_j}}}{if $a_{ij}=1$}~.}

For $0\le j\le n$, let $u_j$ denote the \ord[j] column vector of $U$; note that these vectors form an orthonormal family:
For $j>0$,
\m{\ip{u_0}{u_j}=
\sq{\fr{1-q_j}{q_j}}\tm\PR{X_j=1}-\sq{\fr{q_j}{1-q_j}}\tm\PR{X_j=0}=0;}
for $k>j>0$,
\mal{\ip{u_k}{u_j}
&=\sq{\fr{1-q_k}{q_k}}\tm
\l(\sq{\fr{1-q_j}{q_j}}\tm\PR{X_k=1\land X_j=1}
-\sq{\fr{q_j}{1-q_j}}\tm\PR{X_k=1\land X_j=0}\r)\\
&+\sq{\fr{q_k}{1-q_k}}\tm
\l(\sq{\fr{1-q_j}{q_j}}\tm\PR{X_k=0\land X_j=1}
-\sq{\fr{q_j}{1-q_j}}\tm\PR{X_k=0\land X_j=0}\r)\\
&=0,}
as follows from independence of $X_i$ and $X_j$.
As well, the norm of every $u_i$ is $1$.

Since the matrix $U$ is unitary, or can be made unitary by adding more
columns, we know that the norm of each row of $U$ is at most $1$.
Thus we get, for all $i\in[m]$,

\m[m_l1]{1\ge\sum_{j=0}^n u_{ij}^2
=p_i\tm\l(1+\sum_{j:a_{ij}=0}\fr{q_j}{1-q_j}+\sum_{j:a_{ij}=1}\fr{1-q_j}{q_j}\r)}
\[
=p_i\tm\left(1+\sum_{j=0}^n\left((1-a_{ij})\frac{q_j}{1-q_j}+a_{ij}\frac{1-q_j}{q_j}\right)\right).
\]

Our aim is to find a subset $A_0\subseteq A$ such that every string $a\in A_0$ has a low probability, whereas the weight of $A$ is large. This will give us the lower bound on the entropy.

Let $A_0$  be all the elements of $A$ satisfying
\m[e2]{1+\sum_{j=0}^n\left((1-a_{ij})\frac{q_j}{1-q_j}+a_{ij}\frac{1-q_j}{q_j}\right)
\ge n+1-t.}
W.l.o.g. we may assume that $A_0=\set{a_1\dc a_{m_0}}$.
Then, according to~(\ref{m_l1}), for every  $i=1,\dots,m_0$,
\begin{equation}\label{bound-p-i}
p_i\leq 1/(n+1-t).
\end{equation}
Let $Y$ be the random variable defined by
\[
Y:=1+\sum_{j=1}^n\left((1-X_{j})\frac{q_j}{1-q_j}+X_{j}\frac{1-q_j}{q_j}\right)
=1+\sum_{j=1}^n\frac{q_j}{1-q_j}+ \sum_{j=1}^n\frac{1-2q_j}{q_j(1-q_j)}\tm X_j.
\]
Then we have
\m{\sum_{i=1}^{m_0}p_j=\PR{Y\ge n+1-t}.}
The expectation of $Y$ is
\[
\E{Y}=1+\sum_{j=1}^n\left((1-q_{j})\frac{q_j}{1-q_j}+q_{j}\frac{1-q_j}{q_j}\right)=n+1.
\]
The variance of $Y$ is
\[
\Var{Y}=\Var{\sum_{j=1}^n\frac{1-2q_j}{q_j(1-q_j)}\tm X_j}=
\sum_{j=1}^n\Var{\frac{1-2q_j}{q_j(1-q_j)}\tm X_j}=
\]\[
\sum_{j=1}^n\left(\frac{1-2q_j}{q_j(1-q_j)}\right)^2\tm \Var{X_j}=
\sum_{j=1}^n\frac{(1-2q_j)^2}{q_j(1-q_j)},
\]
where we have used the fact that the variables are pairwise independent.
Now we apply Cantelli's inequality (\lemref{l_Cant}) to the random variable $Y$ and parameter $t$.
\[
\sum_{i=1}^{m_0}p_j=\PR{Y\ge n+1-t}
= \PR{Y\ge \E{Y}-t}
\geq 
\]\[
\frac 1{1+\frac 1{t^2}\Var{Y}}=
\frac 1{1+\frac 1{t^2}\sum_{j=1}^n{\frac{(1-2q_j)^2}{q_j(1-q_j)}}}.
\]
Using this inequality and the fact that $p_i^{-1}>n+1-t$ for all $i\in[m_0]$ (which is (\ref{bound-p-i})), we get
\m{\Hh X=\sum_{i=1}^{m}p_i\log p_i^{-1}
\ge\sum_{i=1}^{m_0}p_i\log p_i^{-1}
> \frac{\log(n+1-t)}{1+\frac 1{t^2}\sum_{j=1}^n{\frac{(1-2q_j)^2}{q_j(1-q_j)}}},}
as required.}

Suppose that $0<q\leq q_j\leq 1/2$ for some $q$ and all $j$. Since 
$\frac 1q\geq\frac{1-2q_j}{q_j(1-q_j)}$, we have
\begin{equation}\label{e4}
\Hh{X}\geq\sup_{0<t\leq n}\frac{\log(n+1-t)}{1+\frac{n}{t^2q}}.
\end{equation}
In particular, for $t=n/2$, 
\[
\Hh{X}\geq\frac{\log(n/2+1)}{1+\frac 4{nq}}.
\]
This proves that the entropy of $X$ is $\Omega(\log n)$ as long as  $q_j\geq\epsilon n^{-1}$, $j=1,\dots,n$, for some $\epsilon>0$, i.e., if $\Hh{X_j}=\Omega(\log n/n)$. On the other hand, if $q_j\leq q(n)$, $j=1,\dots,n$, for some $q(n)=o(n^{-1})$, then $\Hh{X_j}=o(n^{-1}\log n)$, and thus $\Hh{X}=o(\log n)$.

If all $q_j=1/2$ we get $\Hh{X}\geq\log(n+1)$ by taking $t\to 0$. This is tight for infinitely many values of $n$ (see \sref{s_up}) and confirms \fakeconjref{1.2} of Babai~\cite{B13_En}. However, the following theorem implies the same bound even for the min-entropy and the proof is, in fact, more direct.

\theo[t_Hmin]{Let $X=\l(X_1\dc X_n\r)$ be $n$ pairwise independent binary variables.
Let $q_j=\PR{X_j=1}$ and suppose that $0<q_j<1$ for $j=1\dc n$.
Then
\[
\Hh[min]X\ge\log\l(1+\sum_{j=1}^n
\min\left\{\frac{1-q_j}{q_j},\frac{q_j}{1-q_j}\right\}\r).
\]
}

\prf{Let $U$ be an $m\times (n+1)$ matrix as in the proof of \theoref{t_low}, assuming again w.l.o.g. that $\PR{X_j=1}\le\PR{X_j=0}$ always.
From \bref{m_l1} we get that for all $i\in[m]$,
\m{1\ge p_i\tm\l(1+\sum_{j:a_{ij}=0}\fr{q_j}{1-q_j}+\sum_{j:a_{ij}=1}\fr{1-q_j}{q_j}\r)
\geq p_i\tm\l(1+\sum_{j=1}^n
\min\left\{\frac{1-q_j}{q_j},\frac{q_j}{1-q_j}\right\}\r),}
which gives us the required lower bound on $p_i$'s.
}

\begin{corollary}
If all $q_j\geq q$, then 
\begin{equation}\label{e5}
\Hh[min]X\ge\log\l(1+\fr{nq}{1-q}\r).
\end{equation}
\end{corollary}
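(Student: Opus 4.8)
The plan is to read the corollary off directly from \theoref{t_Hmin}, which already supplies the bound $\Hh[min]X\ge\log\l(1+\sum_{j=1}^n\min\left\{\frac{1-q_j}{q_j},\frac{q_j}{1-q_j}\right\}\r)$. All that remains is to lower-bound each summand by a quantity depending only on $q$, so that the sum becomes at least $\frac{nq}{1-q}$ and monotonicity of $\log$ finishes the job.

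First I would invoke the same normalization that was used in the proof of \theoref{t_Hmin}, namely that we may assume $q_j\le 1/2$ for every $j$. This is harmless because the quantity $\min\left\{\frac{1-q_j}{q_j},\frac{q_j}{1-q_j}\right\}$ is invariant under the substitution $q_j\mapsto 1-q_j$. Under this normalization $\frac{q_j}{1-q_j}\le\frac{1-q_j}{q_j}$, so the minimum inside the sum is exactly $\frac{q_j}{1-q_j}$ for each $j$.

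Next I would use monotonicity. The function $\phi(x)=\frac{x}{1-x}$ has derivative $\phi'(x)=\frac{1}{(1-x)^2}>0$ on $(0,1)$, hence it is strictly increasing, and the hypothesis $q_j\ge q$ gives $\frac{q_j}{1-q_j}\ge\frac{q}{1-q}$ for every $j$. Summing over $j=1,\dots,n$ yields $\sum_{j=1}^n\min\left\{\frac{1-q_j}{q_j},\frac{q_j}{1-q_j}\right\}=\sum_{j=1}^n\frac{q_j}{1-q_j}\ge\frac{nq}{1-q}$, and substituting this into the bound of \theoref{t_Hmin} gives $\Hh[min]X\ge\log\l(1+\frac{nq}{1-q}\r)$, as claimed.

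There is no deep obstacle here; the only point that demands care is that the map $q_j\mapsto\min\left\{\frac{1-q_j}{q_j},\frac{q_j}{1-q_j}\right\}$ is \emph{not} monotone over all of $(0,1)$ — it is unimodal, peaking at $q_j=1/2$ and tending to $0$ as $q_j\to1$ — so the bare hypothesis $q_j\ge q$ does not, by itself, control summands with $q_j$ close to $1$. The normalization $q_j\le1/2$ is therefore what makes the monotonicity step legitimate: with it, $q_j\ge q$ becomes $q\le q_j\le1/2$, a range on which $\phi$ increases, and the argument goes through cleanly.
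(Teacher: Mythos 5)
Your calculation is the one the paper intends (the corollary is stated without proof, as an immediate consequence of \theoref{t_Hmin} plus monotonicity of $\phi(x)=\frac{x}{1-x}$), but your ``w.l.o.g.\ $q_j\le 1/2$'' step is a genuine error, not a harmless normalization. Flipping $X_j\mapsto 1-X_j$ does leave $\min\left\{\frac{1-q_j}{q_j},\frac{q_j}{1-q_j}\right\}$ unchanged --- but for exactly that reason it cannot rescue a summand with $q_j$ close to $1$: that summand is close to $0$ and stays close to $0$ after the flip. What the flip does change is the hypothesis: if $q_j=0.99$ and $q=0.4$, the flipped value $0.01$ no longer satisfies $q_j\ge q$, so the inequality $\frac{q_j}{1-q_j}\ge\frac{q}{1-q}$, which you assert ``for every $j$'', fails at that index. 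Your closing claim that with the normalization ``$q_j\ge q$ becomes $q\le q_j\le 1/2$'' is therefore false: the hypothesis $q_j\ge q$ is not invariant under the flip, and no w.l.o.g.\ reduction can make it so.

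The difficulty your proof runs into is real: read literally (only $0<q_j<1$ and $q_j\ge q$), the corollary is false. Take $X_1\dc X_n$ mutually independent with every $q_j=1-\epsilon$ and $q=1/2$; then all $q_j\ge q$, but $\Hh[min]X=-n\log(1-\epsilon)$, which tends to $0$ as $\epsilon\to 0$, far below the claimed $\log(n+1)$. The statement must be read with the standing convention the paper introduces right after \theoref{t_low} (``Suppose that $0<q\le q_j\le 1/2$ for some $q$ and all $j$''), i.e.\ the real hypothesis is $q\le q_j\le 1/2$, equivalently $\min\{q_j,1-q_j\}\ge q$. Under that reading no flip is needed at all: each minimum equals $\frac{q_j}{1-q_j}\ge\frac{q}{1-q}$ since $\phi$ is increasing, the sum is at least $\frac{nq}{1-q}$, and \theoref{t_Hmin} finishes the proof. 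So keep your monotonicity computation, but delete the normalization step and state the hypothesis correctly --- the case $q_j>1/2$ has to be excluded by assumption (or absorbed by phrasing the hypothesis as $\min\{q_j,1-q_j\}\ge q$), not reduced away.
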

For $q=1/2$ (unbiased \pl[X_i]), this corollary gives
\m{\Hh[min]X\ge\log\l(n+1\r),}
which is tight for infinitely many values of $n$. 

\ssect{Pairwise independent finite-outcome variables}

Let $[k]$ be the values that a random variable $X_j$ takes on, $k\geq 2$.

\theo[t_fin]{Let $X=\l(X_1\dc X_n\r)$ be pairwise independent variables that take on values in $[k]$, $k\geq 2$. Let $w$ be such that for all $i\in[n],j\in[k]$,
\m{\PR{X_i=j}\le w
\txt{\tb (i.e., $\Hh[min]{X_i}\ge -\log w$)}.}
If $w\geq\dr12$, then
\m{\Hh[min]X\ge\log\l(\fr{1-w}{w}\tm n+1\r).}
If $w\le\dr12$, then
\m{\Hh[min]X\ge\log(n+1).}
}

To prove the theorem, we need the following technical statement.

\clm[c_com]{For $k\ge2$, let $b_1\dc b_k\ge0$ be such that $\sum_{t=2}^kb_t\ge b_1$ and for all $t\geq 2$, $b_t\leq b_1$.
Then there exist $\alpha_2\dc\alpha_k\in\RR$, such that
\m{\sum_{t=2}^t\en^{\iu\alpha_t}b_t=b_1.}}

\prf[\clmref{c_com}]{Let $C_r$ denote the circle in the complex plane with radius $r$ and center in~$0$. The claim is equivalent to the statement that $b_1$ is in the Minkowski sum of $C_{b_2},\dots,C_{b_k}$.
Note that if $r\leq s$, then $C_r+C_s$ contains $C_s$ as a subset.
Thus the sum $C_2+\dots+C_k$ is either a region between $C_{b_2+\dots+b_k}$ and some smaller circle, or a disc with radius $b_2+\dots+b_k$ -- in any case, it contains both $C_{\max_{2\leq t\leq k}b_t}$ and $C_{b_2+\dots+b_k}$.
Hence it also contains $b_1$.
}

\prf[\theoref{t_fin}]{The proof is a modification of the proofs of \theoref[t_low]{t_Hmin}.

Let $A=\set{a_1\dc a_m}\sbseq[k]^n$ be the support of $X$ and $p_i\deq\PR{X=a_i}$.
For $j\in[n]$, let $w_j\deq\Max[{t\in[k]}]{\PR{X_j=t}}$ and assume without loss of generality that $\PR{X_j=1}=w_j$.
Let $\omega_j=\max\{1,\fr{w_j}{1-w_j}\}$, and let $\alpha_{j2}\dc\alpha_{jk}$ be the values guaranteed by \clmref{c_com} for $b_1=w_j/\omega_j$ and $b_t=\PR{X_j=t}$ for $2\le t\le k$ (which observes the claim requirements).

This time we define the matrix $U$ over $\CC$:\ for $i\in[m]$,
\m{u_{i 0}\deq \sqrt{p_i},}
and for $j\in[n]$,
\m{u_{i j}\deq\twocase
{-\sq{\fr[/]{p_i}{\omega_j}}}{if $a_{i j}=1$}
{\sq{p_i\omega_j}\tm \en^{\iu\alpha_{jz}}}{if $a_{i j}=z>1$}~.}

As before, let $u_j$ denote the \ord[j] column vector of $U$.
Then, by the immediate adaptation of the argument we gave for \theoref{t_low} (taking into account the guarantees of \clmref{c_com}), it holds that for all $j\neq k>0$,
\m{\ip{u_{j}}{u_{k}}=0 \txt{~~and~~} \norm{u_{j}}=1.}
Therefore, the norm of each row of $U$ is at most $1$ and, for every $i$,
\m{1\ge p_i\tm\l(1+\fr n{\omega_{max}}\r),}
where $\omega_{max}\deq\Max[{j\in[n]}]{\omega_j}$.
The result follows.
}

\ssect{$d$-wise independent unbiased binary variables}

One can use an idea from~\cite{ABI86_A_Fa} to derive from the case of pairwise independent variables stronger lower bounds for $d$-wise independent variables.
We will demonstrate it only on \theoref{t_Hmin}, but the same idea can be used together with  other lower bounds on the entropy of pairwise independent variables.

\theo[t_lun]{Let $X=\l(X_1\dc X_n\r)$ where \pl[X_j] are \f d-wise independent unbiased binary variables.
If $d$ is even, then 
\m{\Hh[min]X\geq
\log\l(\sum_{i=0}^{{d/2}}\chs ni\r).}
If $d$ is odd, then
\m{\Hh[min]X
\ge\log\l(\sum_{i=0}^{(d-1)/2}\chs{n}{i} + \chs{n-1}{(d-1)/2}\r).}
}

\prf{

Let $d$ be even.
We define $Y=\l(Y_1\dc Y_m\r)$, where all \pl[Y_i] are unbiased binary variables equal to the parity of at most ${d/2}$ variables $X_i$ and $m=\sum_{i=1}^{{d/2}}\chs ni$ (every $Y_i$ is unique).
Clearly, $Y_1\dc Y_m$ are pairwise independent, and from \theoref{t_Hmin} we get
\m{\Hh[min]X\ge\Hh[min]Y\ge
\log\left(1+\sum_{i=1}^{{d/2}}\chs ni\right)
= \log\l(\sum_{i=0}^{{d/2}}\chs ni\r).}

If $d$ is odd, we take the parities of at most $(d-1)/2$ variables $X_i$ and the parities of $X_1$ with exactly $(d-1)/2$ other variables.
The resulting variables are again pairwise independent.

}

In the next section we will see, in particular, that the above bound is tight for the case of $d=3$ and $n$ being a power of $2$. 

\sect[s_up]{Upper bounds}

In this section we review some constructions of $d$-wise independent
unbiased binary variables with low entropies.  The constructions
are based on known ideas, and they are included here to argue
optimality of the lower bounds from \sref{s_low}. 

The standard way of constructing \f d-wise independent distributions
is using parity check matrices of codes with minimum
distance $\geq d$.  In such matrices every $d$ columns are linearly
independent.  Hence, if we take the space of vectors generated by the
rows of such a matrix, i.e., the dual code, we obtain $d$-wise
independent variables.
Over $GF_2$, these are balanced binary variables.
To get matching bounds we have to find suitable codes. 

We start with the case of pairwise independent variables ($d=2$).
Recall that an Hadamard matrix is a real matrix with entries $\pm 1$
whose rows (and hence also columns) are orthogonal.  Hadamard matrices
exist for infinitely many dimensions, in particular for every power
of~$2$. Given an Hadamard matrix of dimension $n+1$, first transform
it into an Hadamard matrix with the first column having all 1s, then
delete the first column. The resulting $(n+1)\times n$ matrix defines
in a natural way an \emph{Hadamard code} and $n$ pairwise independent
balanced binary variables supported on a set of size $n+1$.

Lancaster~\cite{L65_Pair} proved:
\begin{enumerate}
\item For every $n\geq 2$, there exist at most $n$ pairwise independent random variables on a probability space with $n+1$ points.
\item The existence of such random variables where, additionally, each point in the
probability space has measure $\frac 1{n+1}$ is equivalent to the
existence of an Hadamard matrix of dimension $n+1$.
\end{enumerate}
Our proofs of Theorems~\ref{t_low} and~\ref{t_Hmin} can be viewed as an
extension of an argument used by Lancaster to prove 2.
Lancaster considered general (finite-outcome) pairwise independent variables.
For unbiased binary variables, we can prove the following.

\begin{theorem}
The existence of $n$ pairwise independent unbiased binary
variables with entropy equal to $\log(n+1)$ is equivalent to the
existence of an Hadamard matrix of dimension $n+1$.
\end{theorem}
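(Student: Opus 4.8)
The plan is to prove both implications with the help of the orthonormal matrix $U$ built in the proof of \theoref{t_low}, specialized to the unbiased case $q_j=\fr12$ for every $j$. In that case the entries collapse to $u_{i0}=\sq{p_i}$ and $u_{ij}=\pm\sq{p_i}$ for $j\in[n]$, so that each row satisfies $\sum_{j=0}^n u_{ij}^2=(n+1)p_i$. Since the columns of $U$ are orthonormal, every row has norm at most $1$, whence $p_i\le\fr1{n+1}$ and therefore $\Hh X=\sum_i p_i\log p_i^{-1}\ge\log(n+1)$ --- exactly the bound already recorded after \theoref{t_low}. The substance of the theorem is to determine precisely when this inequality is an equality.

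For the direction from an Hadamard matrix to the variables I would use the construction sketched earlier in this section: given an Hadamard matrix of dimension $n+1$, normalize its first column to all ones, delete that column, and read the resulting $(n+1)\times n$ array of $\pm1$ entries as a sample space of $n+1$ equiprobable points, each row listing the values of $X_1\dc X_n$. Orthogonality of a deleted column with the all-ones column makes the corresponding variable balanced, and orthogonality of two deleted columns, combined with these two balance conditions, forces the four joint sign-patterns to occur equally often; hence each pair $(X_j,X_k)$ is independent. As the distribution is uniform on $n+1$ points, $\Hh X=\log(n+1)$.

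For the converse, suppose $n$ pairwise independent unbiased binary variables attain $\Hh X=\log(n+1)$. From $\Hh X=\sum_i p_i\log p_i^{-1}\ge\sum_i p_i\log(n+1)$ together with $p_i\le\fr1{n+1}$, equality forces $p_i=\fr1{n+1}$ throughout the support and hence exactly $n+1$ support points. Consequently every row of $U$ has norm exactly $1$, so the now square $(n+1)\times(n+1)$ matrix $U$ has both orthonormal columns and orthonormal rows, i.e. $U$ is orthogonal, and each of its entries equals $\pm\fr1{\sq{n+1}}$. Rescaling, $H\deq\sq{n+1}\,U$ has $\pm1$ entries and satisfies $H^\top H=(n+1)I$, so $H$ is an Hadamard matrix of dimension $n+1$.

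I expect the main obstacle to lie in this equality analysis, where one must show that meeting the entropy bound is not a mere numerical coincidence among the $p_i$ but rigidly fixes the geometry of $U$. The crucial point is that equality simultaneously pins the support size to exactly $n+1$ and every row norm to exactly $1$; only then is $U$ square, so that orthonormality of its columns upgrades to full orthogonality and the $\pm\fr1{\sq{n+1}}$ entry pattern produces a genuine Hadamard matrix. One should verify that no completion of $U$ by additional columns is needed here: the sum of squared row norms equals the sum of squared column norms $\sum_{j=0}^n\norm{u_j}^2=n+1$, so unit row norms already force exactly $n+1$ support points.
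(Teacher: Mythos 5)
Your proposal is correct and takes essentially the same approach as the paper: the same Hadamard-code construction for the forward direction, and for the converse the same row-norm bound $p_i\le 1/(n+1)$ obtained from the orthonormal matrix $U$ of \theoref{t_low} (this is precisely \theoref{t_Hmin} in the unbiased case), with the equality analysis forcing exactly $n+1$ equiprobable support points. The only difference is presentational: where the paper defers the last step to Lancaster's theorem or to ``our proof of Theorem~\ref{t_low}'', you spell it out explicitly, observing that the now-square $U$ is orthogonal and rescales to an Hadamard matrix of dimension $n+1$.
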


\begin{proof}
As shown above, an Hadamard matrix of dimension $n+1$ gives rise to
$n$ pairwise independent unbiased binary variables with entropy
equal to $\log(n+1)$.

To prove the converse, let $n$ pairwise independent unbiased
binary variables with entropy equal to $\log(n+1)$ be
given. According to Theorem~\ref{t_Hmin}, every point in the
probability space has measure at most  $\frac 1{n+1}$. Since the
entropy is $\log(n+1)$, this implies that there are exactly $n+1$ points,
each with measure $\frac 1{n+1}$. The existence of an Hadamard
matrix of dimension $n+1$ then follows from Lancaster's theorem, or from
our proof of Theorem~\ref{t_low}. 
\end{proof}

\medskip
Another case where we can precisely match the lower bound for infinitely many
values of $n$ is $d=3$. Let $n=2^l$ and consider the $(l+1)\times
n$ binary matrix whose first row consists of \pl[1] and the columns
restricted to the remaining $l$ rows are all vectors of length $l$.
Every two different columns are linearly independent over $\GF[l+1]2$ because they are different.
Every three different columns are also independent because
every two of them are and they cannot sum to zero vector due to the
first row.  Hence the space generated by the rows is $3$-wise
independent.  The size of the space is $2^{l+1}=2n$, precisely
matching the statement of \theoref{t_lun}. Thus we have:

\begin{theorem}
If $n$ is a power of $2$, then the minimum of $\Hh[min]X$ taken over all $n$-tuples of $3$-wise independent unbiased binary variables is $\log 2n$. 
\end{theorem}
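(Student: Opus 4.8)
The plan is to pin down the minimum by matching a lower bound and an upper bound, both equal to $\log 2n$. The lower bound is immediate from the odd case of \theoref{t_lun}: taking $d=3$ there we have $(d-1)/2=1$, so
\[
\sum_{i=0}^{1}\chs{n}{i}+\chs{n-1}{1}=1+n+(n-1)=2n,
\]
whence every $n$-tuple of $3$-wise independent unbiased binary variables satisfies $\Hh[min]X\ge\log 2n$. It therefore suffices to exhibit one such $n$-tuple that attains $\log 2n$.

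For the matching upper bound I would verify in detail the construction sketched just before the theorem. Write $n=2^l$ and let $M$ be the $(l+1)\times n$ binary matrix whose top row is all $1$'s and whose bottom $l$ rows have as their columns all $2^l$ vectors of $\GF[l]2$. First I would confirm that the $l+1$ rows of $M$ are linearly independent over $\GF[]2$: the bottom $l$ rows have rank $l$ (their columns already include every standard basis vector of $\GF[l]2$), and the top all-$1$'s row is not in their span, since any combination of the bottom rows, evaluated at the zero column, yields $0$ rather than $1$. Consequently the row space $C^{\perp}$ of $M$ is a subspace of $\GF[n]2$ of dimension $l+1$, hence of size $2^{l+1}=2n$.

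I would then take the random variables to be the coordinates of a uniformly random codeword $c\in C^{\perp}$, setting $X_j=c_j$; the joint distribution of $X=(X_1\dc X_n)$ is uniform over the $2n$ distinct points of $C^{\perp}\sbseq\GF[n]2$, so $\Hh[min]X=\log 2n$ exactly. It remains to check the independence properties, which reduce to linear independence of the relevant columns of $M$. Each column is nonzero (its first entry is $1$), so every $X_j$ is unbiased; any two columns are distinct and hence independent over $\GF[]2$; and any three distinct columns $u,v,w$ admit no nontrivial dependence, the only candidate relation $u+v+w=0$ being ruled out because the first coordinates sum to $1+1+1=1\ne0$. Thus $X$ consists of $3$-wise independent unbiased binary variables with $\Hh[min]X=\log 2n$, completing the proof.

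The argument is essentially routine once \theoref{t_lun} is in hand, so I anticipate no genuine difficulty; the one point deserving care is the verification that $C^{\perp}$ really has $2n$ points and not fewer, i.e. the row-independence check. Notably, the all-$1$'s top row plays a double role here: it both forces the full dimension $l+1$ and kills the triple relation $u+v+w=0$, so it is the single feature on which the tightness of the construction hinges.
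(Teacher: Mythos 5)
Your proposal is correct and takes essentially the same approach as the paper: the lower bound $\log 2n$ comes from the odd ($d=3$) case of Theorem~\ref{t_lun}, and the matching upper bound from the same extended Hamming-code construction (an all-ones row atop the matrix whose columns are all vectors of length $l$), whose row space of size $2n$ yields $3$-wise independent unbiased binary variables. The extra verifications you include (row rank, hence exactly $2n$ codewords, and the column-independence checks) merely spell out details the paper leaves implicit in the paragraph preceding the theorem.
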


Note that the above construction is based on the parity-check matrix of the Hamming code: first we extend the matrix by a column with all zeros and then we extend it by a row with all ones. The two constructions, one based on the Hadamard code and the other based on the Hamming code, can be generalized using BCH codes. Recall that the binary BCH code of length $2^m-1$ and designed distance $2t+1$ has the minimal distance at least $2t+1$ and dimension $2^m-1-mt$, provided that $m$ is sufficiently large with respect to $t$ (see~\cite{macwilliams-sloane}, pages 258 and 253). Hence every $2t$ columns of the parity-check matrix (and also of the dual code) are linearly independent and the dimension of the space generated by the parity-check matrix (i.e., the dual code) has dimension $mt$. Thus for $d>2$ even, we can take a BCH code with designed distance $2t+1=d+1$ and we get $n=2^m-1$ $d$-wise independent random variables with min-entropy
\[
\frac d2\log(n+1).
\]
For $d>3$ odd, we take $2t+1=d$, and extend the parity-check matrix matrix by a column of zeros and a row of ones, as we did above. Thus we obtain a matrix with every $d$ columns independent. Let $n=2^m$ be the number of columns of this matrix. The linear space generated by the rows gives a probability space of $n$ $d$-wise independent random variables with min-entropy
\[
\frac{d-1}2\log n+1.
\]
These bounds are asymptotically equal to 
the lower bounds of \theoref{t_lun} when $n$ goes to infinity. However, we have
not been able to find constructions matching our lower bound exactly
for any $d\geq 4$ and any $n$.

\section{Conclusions}

We proved several lower bounds on the entropy of pairwise and $d$-wise independent random variables. Our lower bounds match upper bounds exactly, or asymptotically for some special values of the parameters involved. But for most values of parameters, we do not know even the asymptotic behavior of the dependence of entropy on them. This is, in particular, so in the case of equally distributed pairwise independent 0-1 variables. In this special case we have two bounds~(\ref{e4}) and~(\ref{e5}), which give an asymptotically optimal bound for $q\approx 1/n$ and a tight bound for $q=1/2$, but for other values we do not know. Another interesting problem, studied in~\cite{B13_En}, is to find the best lower bound on the joint entropy $\Hh{X_1,\dots,X_n}$ of a string of pairwise random variables $ X_1,\dots,X_n$ in terms of the parameter $L:=\sum_j X_j$. 
For more open problems, see~\cite{B13_En}.

\subsection*{Acknowledgment}
We would like to thank an anonymous referee for suggesting better formulations of some theorems and an elegant proof of Claim~\ref{c_com}.


\begin{thebibliography}{ABI86}

\bibitem[ABI86]{ABI86_A_Fa}
N.~Alon, L.~Babai, and A.~Itai.
\newblock {A Fast and Simple Randomized Parallel Algorithm for the Maximal
Independent Set Problem}.
\newblock {\em Journal of Algorithms 7(4)}, pages 567--583, 1986.

\bibitem[AS08]{AS08_The}
N.~Alon and J.~Spencer.
\newblock {The Probabilistic Method}.
\newblock {\em John Wiley}, 2008.

\bibitem[Bab13]{B13_En}
L.~Babai.
\newblock {Entropy Versus Pairwise Independence (Preliminary version)}.
\newblock {\em
\url{http://people.cs.uchicago.edu/~laci/papers/13augEntropy.pdf}}, 2013.

\bibitem[Can10]{C10_Intorno}
F.~P. Cantelli.
\newblock {Intorno ad un teorema fondamentale della teoria del rischio}.
\newblock {\em Bollettino dell' Associazione degli Attuari Italiani 24}, pages
1--23, 1910.

\bibitem[Lan65]{L65_Pair}
H.~O. Lancaster.
\newblock {Pairwise Statistical Independence}.
\newblock {\em The Annals of Mathematical Statistics 36(4)}, pages 1313--1317,
1965.

\bibitem[LW06]{LW06_Pair}
M.~Luby and A.~Wigderson.
\newblock {Pairwise Independence and Derandomization}.
\newblock {\em Foundations and Trends in Theoretical Computer Science 1(4)},
pages 237--301, 2006.

\bibitem[MS83]{macwilliams-sloane}
F.J.~MacWilliams and N.J.A.~Sloane.
\newblock {The Theory of Error-Correcting Codes}.
\newblock {\em North Holland Publishing Co.}, 1983.

\end{thebibliography}
\end{document}